\newtheorem{theorem}{Theorem}
\title{Opinion Manipulation in Social Networks}
\author[1]{Alonso Silva
\thanks{Email: \href{mailto:alonso.silva@nokia-bell-labs.com}{alonso.silva@nokia-bell-labs.com}
To whom correspondence should be addressed.}}
\affil[1]{Nokia Bell Labs\\ Centre de Villarceaux\\ Route de Villejust\\ 91620 Nozay\\ France}
\date{}
\begin{document}
\maketitle

\begin{abstract}
In this work, we are interested in
finding the most efficient use of
a budget to promote an opinion
by paying agents
within a group to supplant their true opinions.
We model opinions as continuous scalars ranging from $0$ to~$1$
with $1$ ($0$) representing extremely positive (negative) opinion.
We focus on asymmetric confidence between agents.
The iterative update of an agent corresponds to the best response
to other agents' actions.
The resulting confidence matrix can be seen as
an equivalent Markov chain.
We provide simple and efficient algorithms to solve this problem
and we show through an
example how to solve the stated problem in practice.
\end{abstract}

\section{Introduction}

The process of interpersonal influence that affects agents' opinions
is an important foundation of their socialization and identity.
This process can produce shared understandings and agreements
that define the culture of the group.
The question that we are trying to answer here is
how hard or costly can it be for an external entity to change
the largest proportion of opinions of a group
by supplanting the true opinions of some agents within the group.

Starting from an initial distribution of continuous opinions
in a network of interacting agents
and agents behaving according to the best-response dynamics,
our objective is to
efficiently supplant the opinions of some agents to drive
the largest proportion of opinions towards a target set of values.
In particular, we are interested to
maximize the expected number of agents
with an opinion of at least certain threshold value.

\subsection*{Related Work}

A coordination game is played between the agents
in which adopting a common strategy has a lower cost.
When the confidence matrix is a row-stochastic matrix, it can be seen
as an equivalent Markov chain. When the Markov chain is irreducible and aperiodic,
\cite{DeGroot74} gives sufficient conditions for convergence to a consensus.
There is a growing literature on social learning using a Bayesian perspective
(see e.g.~\cite{Acemoglu2011}).
Our model belongs to the non-Bayesian framework, which keeps the computations tractable.
\cite{Yildiz2013}~studies binary $0$-$1$ opinion dynamics.
Here, we study the continuous opinion dynamics where the opinion
belongs to a bounded interval.
Our work is mostly related to~\cite{GhaderiS14} in the case of no stubborn agents.
However, in~\cite{GhaderiS14} the interactions between agents are symmetric
and the cost for each agent of differing with its interacting agents is the same.
Our work is also related to consensus problems~\cite{FagnaniZ2008}
in which the question of interest is whether
beliefs (some scalar numbers) held by different agents
will converge to a common value. 

\subsection*{Our contributions}

We study opinion dynamics
in the directed graph instead of the undirected graph.
In our opinion, this scenario is more realistic
since when an agent influences another agent it doesn't
mean that the latter influences the former.
This directed graph will be edge-weighted since
we consider different costs for an agent of differing with
each of its interacting agents.
Agents iteratively update their opinions
based on their best-response dynamics
which are given by a linear dynamic system. 
The confidence matrix describing the opinion dynamics can be seen as
an equivalent Markov chain and by decomposing
the states of this equivalent Markov chain between transient and recurrent
states,
we show that in the case we have only
recurrent states the problem
can be reduced to a knapsack problem
which can be approximated
through an FPTAS scheme.
In the presence of transient states,
the problem can be reduced
to a Mixed Integer Linear Programming problem
which in general is NP-hard but
for which there are efficient
implementations.
We show through an
example how to solve this problem
in practice.

\subsection*{Organization of the work}

The work is organized as follows.
In Section~\ref{sec:model}, we provide the
definitions and introduce the model.
In Section~\ref{sec:results},
we provide the main results of our work.
In Section~\ref{sec:example},
we give an example 
to explain how the problem can be solved in practice.
We conclude our work in Section~\ref{sec:conclusions}.
\section{Model and Definitions}\label{sec:model}

Consider a group of $n$ agents, denoted by \mbox{$\mathcal{I}=\{1,\ldots,n\}$}.
For simplicity, we consider that each agent's opinion
can be represented over the interval~$[0,1]$.
For example, they could represent
people's opinions about the current government
with an opinion $1$ corresponding to perfect satisfaction
with the current government and 
$0$ representing an extremely negative view towards the current government.
In this work, we consider a synchronous version of the problem where
time is slotted and each agent's opinion
will be given by~$x_i(t)\in[0,1]$ for $t=1,2,\ldots$
We have a budget~$B\ge 0$ and
we want to efficiently use this budget to pay some agents to favor either a positive (closest to $1$) or negative (closest to $0$) opinion
over the group of agents.
Without loss of generality (w.l.o.g.), we consider that we are promoting opinions closest to~$1$.
In the previous example, it would correspond to promote positive opinions towards the current government.
We want to supplant the opinions of
some agents in order to change the opinion of the largest proportion of the population.
We consider a threshold opinion given by~$x^*$
that we would like that the largest proportion of the population at least has.
In the previous example, it could be the threshold to have an approving or at least neutral opinion of the current government ($x^*=1/2$)
or the threshold to actually register and go to vote in the next election which we could consider to be much higher than~$1/2$ (e.g. $x^*=3/5$).
Agents who have an opinion greater or equal to the threshold are called {\sl supporters}.
If every agent is a supporter, i.e. it has an opinion greater or equal than~$x^*$, the problem is trivial
since even without spending any budget we have succeeded in achieving
our goal.
The problem gets interesting when there are agents who have opinions smaller than $x^*$.
The focus of the present work is on the asymptotic opinions of the agents.
In other words, we would like to maximize
\begin{equation*}
\lvert\{i\in\mathcal{I}: x_i(+\infty)\ge x^*\}\rvert,
\end{equation*}
where $\lvert\cdot\rvert$ denotes the set's cardinality.

We assume that there will be a cost (which will depend on the agent) for changing the agent's opinion.
In the present work, we consider that the payments have to be done at only one time~$t_0$
and without loss of generality we consider that $t_0=0$.
To differentiate between the true opinion and the expressed (after payment) opinion,
we denote the true opinion by~$\hat x_i$ and the expressed opinion by $x_i$.
We assume that the payment we need to give to agent~$i$
to change its true opinion from~$\hat x_i(0)$ to the expressed opinion~$x_i(0)$
is given by
\begin{equation}\label{eq:linearcost}
p_i=c_i(x_i(0)-\hat x_i(0)).
\end{equation}
The budget constraint is given by
\begin{equation*}
\sum_{i\in\mathcal{I}} p_i\le B.
\end{equation*}

Our objective is to solve the following problem:
\begin{equation}
\begin{aligned}
& \textrm{Maximize } \lvert\{i\in\mathcal{I}: x_i(+\infty)\ge x^*\}\rvert,\\
& \textrm{subject to }\sum_{i\in\mathcal{I}} p_i\le B,
\end{aligned}
\tag{P1}\label{opt-P}
\end{equation}
where part of the problem is to discover the dependence
between the asymptotic opinions of the agents $\{x_i(+\infty): i\in\mathcal{I}\}$
and the payments $\{p_i: i\in\mathcal{I}\}$.

We consider a weighted directed graph of the $n$ agents, denoted by $\mathcal{G}=(\mathcal{I},\mathcal{E},w)$,
where each vertex corresponds to an agent and each edge is an ordered pair of vertices $(i,j)\in\mathcal{E}\subseteq\mathcal{I}\times\mathcal{I}$ which indicates
that agent $i$ takes into account, or considers relevant, the opinion of agent~$j$.
We notice that this isn't necessarily a symmetric relationship, for this reason we consider a directed graph.

In the following, we focus on one of the agents and discuss how
this agent may change its opinion
when it is informed
of the (expressed) opinions of other agents.


We assume each agent $i\in\mathcal{I}$ has an individual cost function of the form
\begin{equation*}
J_i(x_i(t),\mathcal{N}_i)=\frac12\sum_{j\in\mathcal{N}_i}w_{ij}(x_i(t)-x_j(t))^2,
\end{equation*}
where
\mbox{$\mathcal{N}_i:=\{j\in\mathcal{I}:(i,j)\in\mathcal{E}\}$} is the neighborhood of $i\in\mathcal{I}$
and we assume that the weights $w_{ij}$ are nonnegative for all \mbox{$i,j\in\mathcal{N}$} and 
not all zeros for each $i\in\mathcal{I}$.
The objective for each agent is to minimize its individual cost function.

The above formulation defines a {\sl coordination game} with continuous payoffs~\cite{GhaderiS14}
because any vector $x=(x_1,\ldots,x_n)$ with
$x_1=x_2=\ldots=x_n$ is a Nash equilibrium.
We consider that at each time step, every agent observes the opinion of its neighbors
and updates its opinion based on these observations
in order to minimize its individual cost function.

It is easy to check that for every agent~$i\in\mathcal{I}$, its best-response strategy is given by
\begin{equation*}
x_i(t+1)=\frac1{W_i}\sum_{j\in\mathcal{N}_i}w_{ij}x_j(t),
\end{equation*}
where $W_i=\sum_{j\in\mathcal{N}_i}w_{ij}$.
We notice that this extends the work of Ghaderi and Srikant~\cite{GhaderiS14} in the case of no stubborn agents,
where they consider an undirected graph and the cost of differing to be the same across all neighbors ($w_{ij}=1$ for all $i,j$).

Define the {\sl confidence matrix} $A=[A_{ij}]$ where 
\begin{equation*}
A_{ij}=
\left\{
\begin{array}{cl}
\frac{w_{ij}}{W_i} & \textrm{if } (i,j)\in\mathcal{E},\\
0 & \textrm{otherwise.}
\end{array}
\right.
\end{equation*}
Therefore, in matrix form, the best response dynamics are given by
\begin{equation}\label{eq:BR}
x(t+1)=Ax(t),
\end{equation}
where $x(t)=(x_1(t),x_2(t),\ldots,x_n(t))$ is the {\sl vector of opinions}
at time $t$.

We notice that $A$ is a row-stochastic matrix since every element $A_{ij}$
is nonnegative and the sum of the elements in any given row is $1$.
The entry $A_{ij}$ can be interpreted as the weight (or confidence)
that agent $i\in\mathcal{I}$ gives to the opinion of agent~$j\in\mathcal{I}$.
In the following, we make the assumption that each agent has a little bit of self-confidence.

\textrm{Assumption} [Self-confidence]:
We say that the dynamical system~(\ref{eq:BR}) has {\sl self-confidence} if the diagonal of $A$ is strictly positive.
For every agent~\mbox{$i\in\mathcal{I}$}, $A_{ii}>0$ or equivalently $w_{ii}>0$.

It is assumed that the agents of the group continue to make
the revisions given by~(\ref{eq:BR}) indefinitely or until $x(t+1)=x(t)$
for some $t$ such that further revision doesn't actually
change their opinions.

Since $A$ is a row-stochastic matrix,
it can be seen as a one-step transition probability matrix
of a Markov chain with $n$ states and stationary transition probabilities.
Therefore the theory of Markov chains can be applied.

We recall some basic definitions from Markov chains which will be used afterwards.
In the following $j$ will be called a {\sl consequent}
of $i$ (of order $n$), relative to a given stochastic matrix,
if $A^n_{ij}>0$ for some $n\ge1$.
The states of the Markov chain $1,\ldots,n$ can be divided into two classes
as follows:
\begin{itemize}
\item a {\sl transient state} is one
which has a consequent of which it is not itself a consequent;
\item
a {\sl recurrent state} is one which
is a consequent of every one of its consequents.
\end{itemize}
In the following, $F$ will be the class of transient states.
The recurrent states can be further divided into
{\sl ergodic classes}, $E_1, E_2, \ldots$,
by putting two states in the same class if
they are consequent of each other (see e.g.~\cite{Doob53}, p.~179).
Then, if $i\in E_k$, $A^n_{ij}=0$ for all $j\notin E_k$, $n\ge1$.
Remember that if $i\in F$, then at least
one of its consequents lies in an ergodic class.

The decomposition of the states of the equivalent Markov chain
can be accomplished (see~\cite{DeuermeyerF1984}) in $O(\max(\lvert V\rvert,\lvert E\rvert))$.
In the following, $\mathcal{F}$ represents the class of $n_T$ transient states.
We can further decompose the class of recurrent states into
$\mathcal{E}_1,\mathcal{E}_2,\ldots,\mathcal{E}_m$
for some $m\le n$,
corresponding to the ergodic classes of the recurrent states~(see e.g.~\cite{Doob53}, p.~179).
The states of the equivalent Markov chain are aperiodic (under the self-confidence assumption).
We denote by $E_k$ the sub-matrix of $A$
representing the subgraph $\mathcal{E}_k$ of the ergodic class $k$,
composed by $n_k$ states.
Obviously, $n_1+n_2+\ldots+n_m+n_T=n$.

\section{Results}\label{sec:results}

\subsection{Dynamical Systems Without Transient States}\label{subsec:no-transient}

We focus into one of the subgraphs described by sub-matrix $E_k$.
For a subset $S\subseteq\mathcal{I}$ we denote by $1_S$
the $0/1$ vector, whose $i$-the entry is $1$ iff $i\in S$
and $(\cdot)'$ denotes the transpose operator.
Let's denote $\pi^{(k)}$ the normalized (i.e. $1_{\mathcal{E}_k}'\pi^{(k)}=1$) left eigenvector of $E_k$
associated with eigenvalue~$1$.
It is well-known (see e.g.~\cite{Doob53}, p.~182) that
the equilibrium for the ergodic class under dynamics~(\ref{eq:BR}) is unique and 
that the
agents of the ergodic class $k$ will reach a consensus (all opinions
are eventually the same) where
\begin{equation}\label{eq:recurrent}
x_i(+\infty)=\sum_{j\in\mathcal{E}_k}\pi^{(k)}_j x_j(0)\textrm{ for all }i\in\mathcal{E}_k.
\end{equation}
Therefore $\pi^{(k)}_j$ can be interpreted as the influence
of agent $j$ within its ergodic class $k$.

From eq.~(\ref{eq:linearcost}), we have that
\begin{equation*}
x_i(0)=\hat x_i(0)+\frac{p_i}{c_i}.
\end{equation*}
If we focus on ergodic class~$k$,
the problem of what is the minimum budget to
make the consensus opinion of the ergodic class to be higher than a threshold~$x^*$
becomes
\begin{equation}
\begin{aligned}
\textrm{Minimize }&P_k:=\sum_{i\in\mathcal{E}_k} p_i\\
\textrm{ subject to }&\sum_{i\in\mathcal{E}_k}\pi^{(k)}_i\left(\hat x_i(0)+\frac{p_i}{c_i}\right)\ge x^*,\\
&0\le\left(\hat x_i(0)+\frac{p_i}{c_i}\right)\le 1,\quad\forall i\in\mathcal{E}_k,\\
&\textrm{and}\quad p_i\ge 0,\quad\forall i\in\mathcal{E}_k.
\end{aligned}
\tag{P2}\label{opt-P2}
\end{equation}

Reordering the states (which can be done through any efficient sorting procedure in $O(\lvert V\rvert\log\lvert V\rvert)$ see e.g.~\cite{AhoHU1983}),
we can assume w.l.o.g. that
\begin{equation*}
\frac{\pi^{(k)}_1}{c_1}\ge\frac{\pi^{(k)}_2}{c_2}\ge\ldots\ge\frac{\pi^{(k)}_{n_k}}{c_{n_k}},
\end{equation*}
and denoting the critical item of ergodic class $k$ as
\begin{equation}
s=\min\left\{j\in\mathcal{E}_k:\sum_{i=1}^j\pi^{(k)}_i+\sum_{i=j+1}^n\pi^{(k)}_i\hat x_i(0)\ge x^*\right\},
\notag
\end{equation}
we have the following theorem:
\begin{theorem}\label{theo:max}
The optimal solution $\bar p=(\bar p_1, \bar p_2. \ldots, \bar p_{n_k})$ is given by
\begin{equation}
\begin{aligned}
\bar p_j&=
	\left\{
	\begin{array}{cl}
	c_j(1-\hat x_j(0))&\textrm{ for } j=1,\ldots,s-1,\\
	0&\textrm{ for } j=s+1,\ldots,n_k,
	\end{array}
	\right.\\
\bar p_s&=
	\frac{c_s}{\pi^{(k)}_s}\left(x^*-\sum_{j=1}^{s-1}\pi^{(k)}_j-\sum_{j=s}^{n_k}\pi^{(k)}_j\hat x_j(0)\right).\\
\end{aligned}
\notag
\end{equation}
\end{theorem}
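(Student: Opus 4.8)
The plan is to read problem \eqref{opt-P2} as a continuous (``fractional knapsack'') linear program and to prove optimality of the displayed solution by a single exchange inequality. First I would pass from the payments to the opinion shifts $y_i:=p_i/c_i$, equivalently to the expressed opinions $x_i(0)=\hat x_i(0)+y_i$. Under this substitution the pair of pointwise constraints $0\le\hat x_i(0)+p_i/c_i\le 1$ together with $p_i\ge 0$ collapses to the box $0\le y_i\le u_i$ with $u_i:=1-\hat x_i(0)\ge 0$, and \eqref{opt-P2} becomes: minimize $\sum_{i\in\mathcal{E}_k}c_i y_i$ subject to $\sum_{i\in\mathcal{E}_k}\pi^{(k)}_i y_i\ge D$ and $0\le y_i\le u_i$, where $D:=x^*-\sum_{i\in\mathcal{E}_k}\pi^{(k)}_i\hat x_i(0)$ is the gap between the threshold and the budget-free consensus. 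If $D\le 0$ the budget-free profile already meets the consensus constraint and the problem is trivial; so assume $D>0$, the interesting case flagged in Section~\ref{sec:model}. In these variables the claimed optimum corresponds to $y_j=u_j$ for $j<s$, $y_j=0$ for $j>s$, and $y_s=\bar p_s/c_s$.

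The next step is feasibility of this candidate. For nonnegativity of $\bar p_s$: the parenthesized quantity $x^*-\sum_{j=1}^{s-1}\pi^{(k)}_j-\sum_{j=s}^{n_k}\pi^{(k)}_j\hat x_j(0)$ is ``$x^*$ minus the consensus value reached by raising only items $1,\dots,s-1$ to opinion $1$'', which is positive precisely because $s$ is the \emph{minimal} index meeting the defining inequality (for $s=1$ this is just $D>0$); moreover $\pi^{(k)}_s>0$ because $\mathcal{E}_k$ is an irreducible class, so the division is legitimate. For the cap $y_s\le u_s$: rearranging $\bar p_s/c_s\le 1-\hat x_s(0)$ is algebraically identical to $\sum_{j=1}^{s}\pi^{(k)}_j+\sum_{j=s+1}^{n_k}\pi^{(k)}_j\hat x_j(0)\ge x^*$, which holds by the very definition of $s$. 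Finally, $\bar p_s$ is by construction the value that makes $\sum_{i\in\mathcal{E}_k}\pi^{(k)}_i(\hat x_i(0)+y_i)=x^*$, so the first constraint of \eqref{opt-P2} holds with equality.

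The heart of the argument is optimality. Let $y$ be any feasible point and $y^\star$ the candidate, and put $\delta_i:=y_i-y^\star_i$. Feasibility of $y$ together with $\sum_i\pi^{(k)}_i y^\star_i=D$ gives $\sum_{i\in\mathcal{E}_k}\pi^{(k)}_i\delta_i\ge 0$. The box constraints pin the signs: $\delta_i\le 0$ for $i<s$ (there $y^\star_i=u_i$ is maximal) and $\delta_i\ge 0$ for $i>s$ (there $y^\star_i=0$ is minimal), while $\delta_s$ may have either sign. Because the items are ordered so that $\pi^{(k)}_i/c_i$ is nonincreasing, $c_i/\pi^{(k)}_i\le c_s/\pi^{(k)}_s$ for $i\le s$ and $c_i/\pi^{(k)}_i\ge c_s/\pi^{(k)}_s$ for $i\ge s$; in each case, multiplying the correctly-signed number $\pi^{(k)}_i\delta_i$ by $c_i/\pi^{(k)}_i$ instead of by $c_s/\pi^{(k)}_s$ can only increase it, so $c_i\delta_i=\frac{c_i}{\pi^{(k)}_i}\pi^{(k)}_i\delta_i\ge\frac{c_s}{\pi^{(k)}_s}\pi^{(k)}_i\delta_i$ term by term. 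Summing over $\mathcal{E}_k$, $\sum_i c_i\delta_i\ge\frac{c_s}{\pi^{(k)}_s}\sum_i\pi^{(k)}_i\delta_i\ge 0$, that is $\sum_i c_i y_i\ge\sum_i c_i y^\star_i$. Hence $\bar p$ minimizes \eqref{opt-P2}.

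I expect the main obstacle to be the feasibility bookkeeping around the critical index $s$: keeping straight that nonnegativity of $\bar p_s$ comes from the \emph{failure} of the defining inequality at $s-1$ (minimality of $s$), whereas the cap $y_s\le u_s$ comes from its \emph{success} at $s$, and disposing cleanly of the degenerate cases ($s=1$, $D\le 0$, and -- ruled out here by irreducibility of $\mathcal{E}_k$ -- a vanishing $\pi^{(k)}_i$). Once the sign pattern of $\delta$ and the monotonicity of $c_i/\pi^{(k)}_i$ are lined up, the exchange step is a one-line computation; alternatively one could exhibit the LP dual solution $\lambda=c_s/\pi^{(k)}_s$ (with the remaining multipliers read off from the stationarity conditions) and verify complementary slackness, which yields the same conclusion.
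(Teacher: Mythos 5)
Your proof is correct, and while it lives in the same ``fractional knapsack / greedy exchange'' family as the paper's, the execution is genuinely different. The paper argues by contradiction with an infinitesimal perturbation: it first asserts (without proof) that any optimum must satisfy the consensus constraint with equality, assumes the strict ordering $\pi^{(k)}_j/c_j>\pi^{(k)}_{j+1}/c_{j+1}$ ``w.l.o.g.'', and then shows that if an optimum deviated from the greedy pattern below or above the critical index $s$, shifting mass $\varepsilon$ between two items while keeping the constraint fixed would strictly lower the cost; the value of $\bar p_s$ is then read off from maximality. Your argument instead verifies the candidate directly: you check feasibility of $\bar p$ (nonnegativity of $\bar p_s$ from the minimality of $s$, the cap $\bar p_s/c_s\le 1-\hat x_s(0)$ from the defining inequality at $s$, and $\pi^{(k)}_s>0$ from irreducibility), and then prove optimality by the one-shot domination inequality $\sum_i c_i\delta_i\ge\frac{c_s}{\pi^{(k)}_s}\sum_i\pi^{(k)}_i\delta_i\ge 0$ using only the sign pattern of $\delta$ forced by the box constraints and the nonstrict ordering of $c_i/\pi^{(k)}_i$ (equivalently, the dual certificate $\lambda=c_s/\pi^{(k)}_s$). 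What your route buys: it needs no ``every optimum is tight'' lemma, it handles ties in the ratios without any tie-breaking caveat, and it supplies the feasibility bookkeeping that the paper's proof skips entirely (the paper also has a typo, writing $c_\ell(1+\hat x_\ell(0))$ for the upper payment bound). What the paper's route buys is brevity. The only very minor omissions on your side, shared with the paper, are the remark that $s$ always exists because $\sum_{i\in\mathcal{E}_k}\pi^{(k)}_i=1\ge x^*$, and the implicit assumption $c_i>0$.
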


\begin{proof}
See Appendix~\ref{app:A}.
\end{proof}

Theorem~\ref{theo:max} tell us to select the agent $i$ with the highest $(\pi_k)_i/c_i$
and to put it to the maximum possible value.
We notice that by selecting the $i$ with the highest $(\pi_k)_i/c_i$
we are selecting the agent whose influence-to-cost ratio is minimum.

From Theorem~\ref{theo:max}, the optimal value $\bar P_k=\sum_{i\in\mathcal{E}_k}\bar p_i$ of (\ref{opt-P2}) is given by
\begin{equation}\label{eq:opt-value}
\begin{aligned}
\bar P_k=&\sum_{j=1}^{s-1} c_j(1-\hat x_j(0))+\\
&+\frac{c_s}{\pi^{(k)}_s}\left(x^*-\sum_{j=1}^{s-1}\pi^{(k)}_j-\sum_{j=s}^{n_k}\pi^{(k)}_j\hat x_j(0)\right).
\end{aligned}
\end{equation}

Therefore, for each ergodic class~$k$
we have the payment $\bar P_k$, given by eq.~(\ref{eq:opt-value}),
we need to make to obtain $n_k$ agents having an opinion greater or equal than $x^*$.
More importantly, these payments~$\{\bar P_k: 1\le k\le m\}$ are 
independent between them in the sense that each payment affects only
the ergodic class where the payment was made.

In other words,
the problem~(\ref{opt-P}) is equivalent to determine~$\{z_k\}$ where
\begin{equation}
z_k=
\left\{
\begin{array}{ll}
1 & \textrm{if class $k$ is selected}\\
0 & \textrm{otherwise}
\end{array}
\right.
\end{equation}
in order to
\begin{equation}
\begin{aligned}
\textrm{Maximize}&\sum_{k=1}^m z_k n_k\\
\textrm{ subject to }&
\sum_{k=1}^m z_k \bar P_k\le B\\
\textrm{ and }&
z_k\in\{0,1\}.
\end{aligned}
\tag{P2'}\label{opt-P2-prime}
\end{equation}

This is the classic $0$-$1$ knapsack problem,
and thus we can use the well-known linear time FPTAS\footnote{
An FPTAS, short for Fully Polynomial Time Approximation Scheme, is an algorithm that for any
$\varepsilon$ approximates
the optimal solution up to an error $(1+\varepsilon)$ in time $\textrm{poly}(n/\varepsilon)$.}
algorithm of Knapsack~\cite{Lawler77} to obtain a FPTAS to problem~(\ref{opt-P}).

\subsection{Dynamical Systems With Transient States}

For the recurrent states, the previous analysis still holds.
For the transient states,
we need to use different properties of Markov chains.
From subsection~(\ref{subsec:no-transient}), we know
that the equilibrium for each ergodic class under dynamics~(\ref{eq:BR}) is unique and 
that agents within each ergodic class will reach a consensus where
\begin{equation*}
x_i(+\infty)=\sum_{j\in\mathcal{E}_k}\pi^{(k)}_j x_j(0)\textrm{ for all }i\in\mathcal{E}_k.
\end{equation*}
We denote 
\begin{equation*}
\mathcal{O}_k:=\sum_{j\in\mathcal{E}_k}\pi^{(k)}_j x_j(0)
\end{equation*}
the consensus opinion of ergodic class~$k$.
We know that the system will remain among the transient states through
only a finite number of transitions, with probability~$1$ (see e.g.~\cite{Doob53}, p.~180).
Moreover,
we have the following theorem:
\begin{theorem}\label{theo:transient}
The equilibrium for a transient state~$i\in\mathcal{F}$ (see e.g.~\cite{Doob53}, p.~182) under dynamics~(\ref{eq:BR}) is unique and given by
\begin{equation}\label{eq:transient}
x_i(\infty)=\sum_{k=1}^m h_i^{(k)}\mathcal{O}_k,
\end{equation}
where $h_i^{(k)}$ denotes the hitting probability
of the recurrent ergodic class~$\mathcal{E}_k$ starting from $i\in\mathcal{I}$.
\end{theorem}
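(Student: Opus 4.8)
The plan is to put the one--step matrix $A$ in block lower--triangular form by listing first the states of each closed ergodic class and then the transient states:
\begin{equation*}
A=\begin{pmatrix} E_{1} & & & \\ & \ddots & & \\ & & E_{m} & \\ R_{1} & \cdots & R_{m} & Q \end{pmatrix},
\end{equation*}
where $Q$ is the sub--stochastic matrix of transitions inside $\mathcal{F}$ and $R_{k}$ records the transition probabilities from $\mathcal{F}$ into $\mathcal{E}_{k}$. Since from every transient state the chain reaches some ergodic class after finitely many steps with probability one, the spectral radius of $Q$ is strictly less than $1$; hence $I-Q$ is invertible and $\lVert Q^{t}\rVert\le C\sigma^{t}$ for some $C>0$ and $\sigma\in(0,1)$. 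The ergodic blocks are closed, so $x_{\mathcal{E}_{k}}(t)=E_{k}^{t}x_{\mathcal{E}_{k}}(0)$, and by~(\ref{eq:recurrent}) together with aperiodicity $E_{k}^{t}\to 1_{\mathcal{E}_{k}}(\pi^{(k)})'$, whence $E_{k}^{t}x_{\mathcal{E}_{k}}(0)\to\mathcal{O}_{k}\,1_{\mathcal{E}_{k}}$, the convergence being geometric.

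Next I would unroll the transient block. Writing $x_{\mathcal{F}}(t+1)=Qx_{\mathcal{F}}(t)+\sum_{k}R_{k}E_{k}^{t}x_{\mathcal{E}_{k}}(0)$ and iterating gives
\begin{equation*}
x_{\mathcal{F}}(t)=Q^{t}x_{\mathcal{F}}(0)+\sum_{k=1}^{m}\ \sum_{\ell=0}^{t-1}Q^{t-1-\ell}R_{k}E_{k}^{\ell}x_{\mathcal{E}_{k}}(0).
\end{equation*}
The first term tends to $0$. In the double sum I would substitute $E_{k}^{\ell}x_{\mathcal{E}_{k}}(0)=\mathcal{O}_{k}1_{\mathcal{E}_{k}}+\varepsilon_{k}(\ell)$ with $\lVert\varepsilon_{k}(\ell)\rVert$ decaying geometrically in $\ell$; after reindexing by $r=t-1-\ell$, the main part equals $\mathcal{O}_{k}\sum_{r=0}^{t-1}Q^{r}R_{k}1_{\mathcal{E}_{k}}\to\mathcal{O}_{k}(I-Q)^{-1}R_{k}1_{\mathcal{E}_{k}}$, while the error part, split at $\ell=\lfloor t/2\rfloor$, is controlled by $\lVert Q^{t-1-\ell}\rVert\le C\sigma^{t-1-\ell}$ in the small--$\ell$ range and by the decay of $\varepsilon_{k}(\ell)$ in the large--$\ell$ range, so it vanishes. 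This yields $x_{i}(\infty)=\sum_{k=1}^{m}\big((I-Q)^{-1}R_{k}1_{\mathcal{E}_{k}}\big)_{i}\,\mathcal{O}_{k}$.

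To finish, I would identify $(I-Q)^{-1}R_{k}1_{\mathcal{E}_{k}}$ with the hitting--probability vector $h^{(k)}=(h_{i}^{(k)})_{i\in\mathcal{F}}$. A first--step decomposition gives $h_{i}^{(k)}=(R_{k}1_{\mathcal{E}_{k}})_{i}+(Qh^{(k)})_{i}$ for $i\in\mathcal{F}$, that is $(I-Q)h^{(k)}=R_{k}1_{\mathcal{E}_{k}}$, and invertibility of $I-Q$ forces $h^{(k)}=(I-Q)^{-1}R_{k}1_{\mathcal{E}_{k}}$ (summing over $k$ also confirms $\sum_{k}h_{i}^{(k)}=1$, so $x_{i}(\infty)$ is a convex combination of the $\mathcal{O}_{k}$). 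Since $\mathcal{O}_{k}=\sum_{j\in\mathcal{E}_{k}}\pi^{(k)}_{j}x_{j}(0)$, this is exactly~(\ref{eq:transient}). Uniqueness of the equilibrium for $i\in\mathcal{F}$ is then immediate: the limit $\lim_{t}A^{t}x(0)$ just computed exists coordinatewise, and any equilibrium $x$ must satisfy $x_{i}=\sum_{j}A_{ij}x_{j}$ on $\mathcal{F}$ with the recurrent coordinates already pinned at the unique consensus values $\mathcal{O}_{k}$ — a linear system with matrix $I-Q$ whose single solution is $\sum_{k}h^{(k)}\mathcal{O}_{k}$.

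The main obstacle is the passage to the limit in the double sum $\sum_{\ell}Q^{t-1-\ell}R_{k}E_{k}^{\ell}x_{\mathcal{E}_{k}}(0)$: it is a product of a null sequence and a convergent sequence summed over an index set that grows with $t$, so term--by--term passage to the limit is not legitimate, and one genuinely needs the splitting estimate above (equivalently, a dominated--convergence argument based on the geometric bounds on $\lVert Q^{r}\rVert$ and on $\varepsilon_{k}(\ell)$). Everything else is routine bookkeeping with the block structure together with the classical facts about $E_{k}^{t}$ and absorbing chains quoted earlier.
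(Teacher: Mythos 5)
Your proof is correct, but it takes a genuinely different route from the paper's. The paper's Appendix~B argument quotes the classical limit theorem for finite aperiodic chains, $\lim_{\ell\to\infty}A^{\ell}_{ij}=\rho_i(\mathcal{E}_k)\pi^{(k)}_j$ for $j\in\mathcal{E}_k$ (citing \cite{Doob53}), and then devotes most of its effort to identifying the absorption weight $\rho_i(\mathcal{E}_k)=\lim_n\sum_{j\in\mathcal{E}_k}A^n_{ij}$ with the hitting probability $h_i^{(k)}$, using that $\{X_n\in\mathcal{E}_k\}=\{\tau^{\mathcal{E}_k}\le n\}$ because each ergodic class is closed; the theorem then follows by splitting $\lim_\ell\sum_j A^\ell_{ij}x_j(0)$ into the transient part (which vanishes) and the recurrent part. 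You instead work with the canonical block form $\bigl(E_1,\ldots,E_m;\,R_k,Q\bigr)$, prove the convergence by hand via the convolution identity for $x_{\mathcal{F}}(t)$, the geometric bounds on $\lVert Q^r\rVert$ and on $E_k^{\ell}-1_{\mathcal{E}_k}(\pi^{(k)})'$, and the splitting-at-$t/2$ estimate, arriving at $x_{\mathcal{F}}(\infty)=\sum_k(I-Q)^{-1}R_k1_{\mathcal{E}_k}\,\mathcal{O}_k$; you then recover the probabilistic meaning through the first-step equations $(I-Q)h^{(k)}=R_k1_{\mathcal{E}_k}$, which are exactly the linear system~\eqref{eq:hitting} the paper takes from \cite{Norris98}. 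What your approach buys: it is self-contained (no appeal to the Doob limit theorem), it yields an explicit geometric rate of convergence and the closed-form $h^{(k)}=(I-Q)^{-1}R_k1_{\mathcal{E}_k}$ (together with the check $\sum_k h_i^{(k)}=1$, so the limit is a convex combination of the $\mathcal{O}_k$), and it makes the uniqueness claim a statement about the invertible system $I-Q$. What the paper's route buys: brevity, and no need for the double-sum limit interchange that you correctly flag as the delicate step and handle with the dominated/splitting estimate.
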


\begin{proof}
See Appendix~\ref{app:B}.
\end{proof}

The hitting probabilities for each ergodic class can be calculated
from simple linear equations (see~\cite{Norris98}, p. 13):
\begin{equation}
\begin{aligned}
h_i^{(k)} = 1 &\quad\textrm{for } i\in \mathcal{E}_k,\\
h_i^{(k)} = \sum_{j\in\mathcal{I}} A_{ij}h_j^{(k)} &\quad\textrm {for } i\notin \mathcal{E}_k.
\end{aligned}
\label{eq:hitting}
\end{equation}

From eq.~\eqref{eq:recurrent} and eq.~\eqref{eq:transient}, we notice that the only opinions which affect the asymptotic opinions of the recurrent and the transient states
are the opinions coming from the recurrent states, therefore in the optimum the payments
will be zero for every transient state $p_t=0, \forall t\in\mathcal{F}$.
The hitting probabilities of ergodic class $k$ starting from state $i\in\mathcal{E}_j$ are
given by 
\begin{equation*}
h^{(k)}_i=\left\{
\begin{array}{ll}
1 & \textrm{if } j=k,\\
0 & \textrm{otherwise.}
\end{array}
\right.
\end{equation*}
The hitting probabilities of ergodic class $k$ starting from state $i\in\mathcal{F}$ are
calculated from eq.~(\ref{eq:hitting}).
Therefore, problem~(\ref{opt-P}) becomes
\begin{equation}
\begin{aligned}
& \textrm{Max}\sum_{i\in\mathcal{I}} I\left[\left(\sum_{k=1}^m h^{(k)}_i\sum_{i\in\mathcal{E}_k}\pi^{(k)}_i\left(\hat x_i(0)+\frac{p_i}{c_i}\right)\right)-x^*\right]\\
& \textrm{where }
I(s)=
\left\{
\begin{array}{ll}
1 & \textrm{if } s\ge0,\\
0 & \textrm{otherwise},
\end{array}
\right.\\
& \textrm{subject to the budget constraint }\sum_{i\in\mathcal{I}}p_i\le B.
\end{aligned}
\tag{P3}
\label{opt-P3}
\end{equation}

Defining
\begin{equation}
L:=\min_{i\in\mathcal{I}}\left(\sum_{k=1}^m h^{(k)}_i\sum_{i\in\mathcal{E}_k}\pi^{(k)}_i\hat x_i(0)\right),
\label{eq:lower-bound}
\end{equation}
problem~(\ref{opt-P3}) is equivalent to the following formulation:
\begin{equation}
\begin{aligned}
&\textrm{Maximize }\sum_{i\in\mathcal{I}} z_i\\
&\textrm{subject to }\\
&-\frac{\left(\sum_{k=1}^m h^{(k)}_i\sum_{i\in\mathcal{E}_k}\pi^{(k)}_i\left(\hat x_i(0)+\frac{p_i}{c_i}\right)\right)-x^*}{L-x^*}+1\ge z_i,\\
&0\le\left(\hat x_i(0)+\frac{p_i}{c_i}\right)\le 1,\quad\forall i\in\mathcal{E}_k,\\
&\sum_{i\in\mathcal{I}}p_i\le B,\\
&\textrm{and}\quad z_i\in\{0,1\}\quad\forall i\in\mathcal{I}.
\end{aligned}
\tag{P3'}
\label{opt-P3-prime}
\end{equation}

Indeed,
\begin{align*}
& I\left[\left(\sum_{k=1}^m h^{(k)}_i\mathcal{O}_k\right)-x^*\right]=0,\\
& \Leftrightarrow\quad\left(\sum_{k=1}^m h^{(k)}_i\mathcal{O}_k\right)-x^*<0,\\
&\Leftrightarrow\quad-\frac{\left(\sum_{k=1}^m h^{(k)}_i\mathcal{O}_k\right)-x^*}{L-x^*}<0,
\end{align*}
From~(\ref{opt-P3-prime}),
\begin{equation*}
-\frac{\left(\sum_{k=1}^m h^{(k)}_i\mathcal{O}_k\right)-x^*}{L-x^*}+1\ge z_i,
\end{equation*}
implies that $z_i<1$ and $z_i\in\{0,1\}$ implies that \mbox{$z_i=0$}.

Similarly,
\begin{align*}
&I\left[\left(\sum_{k=1}^m h^{(k)}_i\mathcal{O}_k\right)-x^*\right]=1,\\
&\Leftrightarrow\quad\left(\sum_{k=1}^m h^{(k)}_i\mathcal{O}_k\right)-x^*\ge0,\\
&\Leftrightarrow\quad-\frac{\left(\sum_{k=1}^m h^{(k)}_i\mathcal{O}_k\right)-x^*}{L-x^*}+1\ge1,
\end{align*}
From~(\ref{opt-P3-prime}),
\begin{equation*}
-\frac{\left(\sum_{k=1}^m h^{(k)}_i\mathcal{O}_k\right)-x^*}{L-x^*}+1\ge z_i,
\end{equation*}
implies that $z_i\le1$ but since we are maximizing the objective function we have that $z_i=1$.

The system of equations (\ref{opt-P3-prime}) is a mixed integer programming problem which is NP-hard, however
many software (see e.g. AMPL, R) can solve them quite efficiently.
In the illustrative example, we use the programming language R.

\section{Illustrative Example}\label{sec:example}

Consider as inputs:
\begin{itemize}
\item
the confidence matrix given by the coefficients in Fig.~\ref{fig:opinion},
where the self-loops were not added but their coefficients can be computed since the 
outgoing edges sum to $1$;
\item
the initial opinions of the agents 
\end{itemize}
\begin{equation}
\hat x=(0.5, 0.3,  0.4, 0.1, 0.6, 0.7, 0.3, 0.1, 0.8, 0.1, 0.2, 0.4);
\notag
\end{equation}
\begin{itemize}
\item
the cost (in dollars) to change their opinions by $+0.1$ given by 
\end{itemize}
\begin{equation}
c = (100, 80, 120, 60, 20, 100, 80, 120, 60, 20, 90, 70);
\notag
\end{equation}
\begin{itemize}
\item
the target opinion $x^*=1/2$
\item
and a budget (in dollars). 
\end{itemize}
Our objective is to determine the most efficient use of the budget to
maximize the number of agents who have an opinion of at least $x^*$ (supporters).

\subsection*{Solution}

We can decompose the system in:
\begin{itemize}
\item transient states $\mathcal{F}=\{d,e,f,g,h\}$,
\item ergodic class~$\mathcal{E}_1=\{a,b,c\}$, and
\item ergodic class~$\mathcal{E}_2=\{i,j,k,l\}$.
\end{itemize} 

The ergodic class $\mathcal{E}_1=\{a,b,c\}$ is given as in Fig.~\ref{fig:community1}.
The matrix~$E_1$ is given by the coefficients in Fig.~\ref{fig:community1}.
\begin{equation*}
E_1=
\bordermatrix{%
  &   a &   b &   c\cr
a & 0.7 & 0.3 &   0\cr
b &   0 & 0.6 & 0.4\cr
c & 0.5 &   0 & 0.5
}
\end{equation*}
The stationary distribution~$\pi^{(1)}=\left(\pi^{(1)}_a,\pi^{(1)}_b,\pi^{(1)}_c\right)'$ for ergodic class~$\mathcal{E}_1$ is
the solution of the system of equations
\begin{align*}
&(\pi^{(1)})^T E_1=(\pi^{(1)})^T,\\
&\pi^{(1)}_a+\pi^{(1)}_b+\pi^{(1)}_c=1.
\end{align*}
Therefore
\begin{equation*}
\pi^{(1)}=\left(\frac{20}{47},\frac{15}{47},\frac{12}{47}\right)'.
\end{equation*}

The ergodic class $\mathcal{E}_2=\{i,j,k,l\}$ is given as in Fig.~\ref{fig:community2}.
The matrix~$E_2$ is given by the coefficients in Fig.~\ref{fig:community2}.
\begin{equation*}
E_2=
\bordermatrix{%
  &   i &   j & k & l\cr
i & 0.8 & 0.2 & 0 & 0\cr
j & 0 &  0.8 & 0 & 0.2\cr
k & 0.2 & 0 & 0.8 & 0\cr
l & 0 & 0 & 0.2 & 0.8
}
\end{equation*}
Similarly, the stationary distribution~$\pi^{(2)}=\left(\pi^{(2)}_i,\pi^{(2)}_j,\pi^{(2)}_k,\pi^{(2)}_l\right)'$ for ergodic class~$\mathcal{E}_2$ is 
given by the solution of the system of equations
\begin{align*}
&(\pi^{(2)})^T E_2=(\pi^{(2)})^T,\\
&\pi^{(2)}_i+\pi^{(2)}_j+\pi^{(2)}_k+\pi^{(2)}_l=1.
\end{align*}
Therefore
\begin{equation*}
\pi^{(2)}=\left(\pi^{(2)}_i,\pi^{(2)}_j,\pi^{(2)}_k,\pi^{(2)}_l\right)'=\left(\frac5{39},\frac{20}{39},\frac{10}{39},\frac4{39}\right)'.
\end{equation*}

From eq.~(\ref{eq:hitting}), the hitting probabilities $h^{(1)}=\left(h^{(1)}_a,h^{(1)}_b,\ldots,h^{(1)}_l\right)$ to class $\mathcal{E}_1$ are given by
the equations 
\begin{align*}
& h^{(1)}_a=h^{(1)}_b=h^{(1)}_c=1;\\
& h^{(1)}_i=h^{(1)}_j=h^{(1)}_k=h^{(1)}_l=0;\\
\end{align*}
and
\begin{align*}
\frac35h^{(1)}_d&=\frac25+\frac15h^{(1)}_f;\\
\quad\frac25h^{(1)}_e&=\frac15+\frac15h^{(1)}_g;\\
\frac12h^{(1)}_f&=\frac3{10}h^{(1)}_d+\frac15h^{(1)}_e;\\
\quad\frac25h^{(1)}_g&=\frac15h^{(1)}_e,\\
\frac45h^{(1)}_h&=\frac15h^{(1)}_f+\frac15h^{(1)}_g.
\end{align*}
Therefore~
\begin{equation*}
h^{(1)}=\left(1,1,1,\frac{17}{18},\frac23,\frac56,\frac13,\frac7{24},0,0,0,0\right).
\end{equation*}

Since there are only two recurrent classes,
the hitting probabilities $h^{(2)}=\left(h^{(2)}_a,h^{(2)}_b,\ldots,h^{(2)}_l\right)$ to class $\mathcal{E}_2$ are given by
\begin{equation*}
h^{(2)}=\left(0,0,0,\frac{1}{18},\frac13,\frac16,\frac23,\frac{17}{24},1,1,1,1\right).
\end{equation*}

Replacing the previous quantities and solving the Mixed Integer Linear Programming problem (\ref{opt-P3-prime}) in R,
we obtain Fig.~\ref{fig:supporters} plotting the number of supporters
with respect to the budget. 
In the optimum there are only two agents who receive payments: agent~$a$ and agent~$j$.
The other agents receive zero.
The optimal payments are
\begin{center}
\begin{tabular}{|c|c|c|c|}
\hline
Budget & $p_a$ & $p_j$ & Number of supporters \\ \hline
    99 &       &    99 &  4  \\ \hline
   114 &       &   114 &  5  \\ \hline
   117 &       &   117 &  6  \\ \hline
   169 &       &   169 &  7  \\ \hline
   293 &   113 &   180 &  8  \\ \hline
   309 &   210 &    99 & 12  \\ \hline
\end{tabular}
\end{center}

\begin{center}
\begin{tabular}{|c|c|}
\hline
Number of supporters & Supporters\\ \hline
 4 & $\{i,j,k,l\}$\\ \hline
 5 & $\{h,i,j,k,l\}$\\ \hline
 6 & $\{g,h,i,j,k,l\}$\\ \hline
 7 & $\{e,g,h,i,j,k,l\}$\\ \hline
 8 & $\{e,f,g,h,i,j,k,l\}$\\ \hline
12 & $\{a,b,c,d,e,f,g,h,i,j,k,l\}$\\
\hline
\end{tabular}
\end{center}

\section{Conclusions}\label{sec:conclusions}

We have studied continuous opinion dynamics with asymmetric confidence.
The confidence matrix can be seen as
a Markov chain and by decomposing
the states between transient and recurrent
states,
we proved that in the case we have only
recurrent states the problem
can be reduced to a knapsack problem,
and in the presence of
transient states,
the problem can be reduced
to a Mixed-Integer Linear Programming problem.
We gave an illustrative example on how to solve this problem.

\bibliographystyle{hacm}
\bibliography{mybibfile}

\appendix
\section{Appendix A}\label{app:A}
\subsection*{Proof of Theorem~\ref{theo:max}}
\begin{proof}
Any optimal solution~$p=(p_1,p_2,\ldots,p_{n_k})$ must be maximal in the sense that 
\begin{equation*}
\sum_{i\in\mathcal{E}_k}\pi^{(k)}_i\left(\hat x_i(0)+\frac{p_i}{c_i}\right)=x^*.
\end{equation*}
Assume without loss of generality that 
\begin{equation*}
\frac{\pi^{(k)}_j}{c_j}>\frac{\pi^{(k)}_{j+1}}{c_{j+1}}\quad\forall j\in\mathcal{E}_k
\end{equation*}
and let $p^*$ be the optimal solution of~(\ref{opt-P2}).
Suppose, by absurdity, that for some $\ell<s$, \mbox{$p^*_\ell<c_\ell(1+\hat x_\ell(0))$},
then we must have $p_q^*>\bar p_q$ for at least one item $q\ge s$.
Given a sufficiently small $\varepsilon>0$, we could increase the value of $p^*_\ell$
by $\varepsilon$ and decrease the value of $p^*_q$ by $\varepsilon\frac{\pi^{(k)}_\ell}{c_\ell}\frac{c_q}{\pi^{(k)}_q}$,
thus diminishing the value of the objective function by $\varepsilon\left(\frac{\pi^{(k)}_\ell}{c_\ell}\frac{c_q}{\pi^{(k)}_q}-1\right)>0$
which is a contradiction.
Similarly, we can prove that $p^*_\ell>0$ for $\ell>s$ is impossible. 
Hence $\bar p_s=\frac{c_s}{\pi^{(k)}_s}\left(x^*-\sum_{j=1}^{s-1}\pi^{(k)}_j-\sum_{j=s}^{n_k}\pi^{(k)}_j\hat x_j(0)\right)$
for maximality.
\end{proof}

\section{Appendix B}\label{app:B}
\subsection*{Proof of Theorem~\ref{theo:transient}}
\begin{proof}
We first recall the definition of hitting probabilities.
Let $(X_n)_{n\ge0}$ be a Markov chain with transition matrix $A$.
The first hitting time of a subset $\mathcal{E}\subseteq\mathcal{I}$
is the random variable 
\begin{equation*}
\tau^\mathcal{E}(w)=\inf\{n\ge0: X_n(w)\in\mathcal{E}\},
\end{equation*}
where we agree that the infimum of the empty set is $+\infty$.
The hitting probability starting from $i$ that $(X_n)_{n\ge0}$ ever hits $\mathcal{E}$ is given by
\begin{equation}\label{eq:defhit}
h_i^\mathcal{E}=\mathbb{P}_i(\tau^\mathcal{E}<+\infty).
\end{equation}
In order to simplify the notation,
we denote the hitting probability of ergodic class $k$ as
\begin{equation*}
h_i^{(k)}:=h_i^{\mathcal{E}_k}.
\end{equation*}
Under the self-confidence assumption, if $j\in\mathcal{E}_k$ we have that (see e.g.~\cite{Doob53}, p.~180)
\begin{equation}\label{eq:asym}
\lim_{\ell\to+\infty} A^\ell_{ij}=\rho_i(\mathcal{E}_k)\pi^{(k)}_j
\end{equation}
where
\begin{equation*}
\rho_i(\mathcal{E}_k)=\lim_{n\to+\infty}\sum_{j\in\mathcal{E}_k} A^n_{ij}.
\end{equation*}
Considering the equivalent Markov chain, we have that
\begin{align}
\rho_i(\mathcal{E}_k)&=\lim_{n\to+\infty}\sum_{j\in\mathcal{E}_k}\mathbb{P}_i(X_n(w)\in j)\notag\\
&=\lim_{n\to+\infty}\mathbb{P}_i(X_n(w)\in\mathcal{E}_k).\label{eq:macondo}
\end{align}
We have that the following equality of sets 
\begin{equation}\label{eq:setsequality}
\{w:X_n(w)\in\mathcal{E}_k\}=\{w:\tau^{\mathcal{E}_k}(w)\le n\}.
\end{equation}
Therefore
\begin{align*}
\rho_i(\mathcal{E}_k)&=\lim_{n\to+\infty}\mathbb{P}_i(X_n(w)\in\mathcal{E}_k)\\
&=\lim_{n\to+\infty}\mathbb{P}_i(\tau^{\mathcal{E}_k}(w)\le n)\\
&=\mathbb{P}_i(\tau^{\mathcal{E}_k}(w)<\infty)\\
&=h_i^{(k)}
\end{align*}
where the first equality is coming from eq.~\eqref{eq:macondo},
the second equality from eq.~\eqref{eq:setsequality},
and the last equation from eq.~\eqref{eq:defhit}.
Replacing in eq.~\eqref{eq:asym}, if $j\in\mathcal{E}_k$
\begin{equation*}
\lim_{\ell\to+\infty} A^\ell_{ij}=h_i^{(k)}\pi^{(k)}_j.
\end{equation*}

Therefore
\begin{align*}
&x_i(+\infty)=\lim_{\ell\to+\infty}x_i(\ell+1)\\
&=\lim_{\ell\to+\infty}\sum_{j\in\mathcal{I}}A_{ij}^\ell x_j(0)\\
&=\lim_{\ell\to+\infty}\sum_{j\in\mathcal{F}}A_{ij}^\ell x_j(0)+
\lim_{\ell\to+\infty}\sum_{k=1}^m\sum_{j\in\mathcal{E}_k}A_{ij}^\ell x_j(0)\\
&=\sum_{k=1}^m\sum_{j\in\mathcal{E}_k}\lim_{\ell\to+\infty}A_{ij}^\ell x_j(0)\\
&=\sum_{k=1}^mh_i^{(k)}\sum_{j\in\mathcal{E}_k}\pi_j^{(k)}x_j(0)\\
&=\sum_{k=1}^mh_i^{(k)}\mathcal{O}_k.
\end{align*}
\end{proof}

\onecolumn
\section*{Figures}

\begin{figure}[h!]
\caption{Opinion Dynamics}\label{fig:opinion}
\centering
	\includegraphics[width=.9\textwidth]{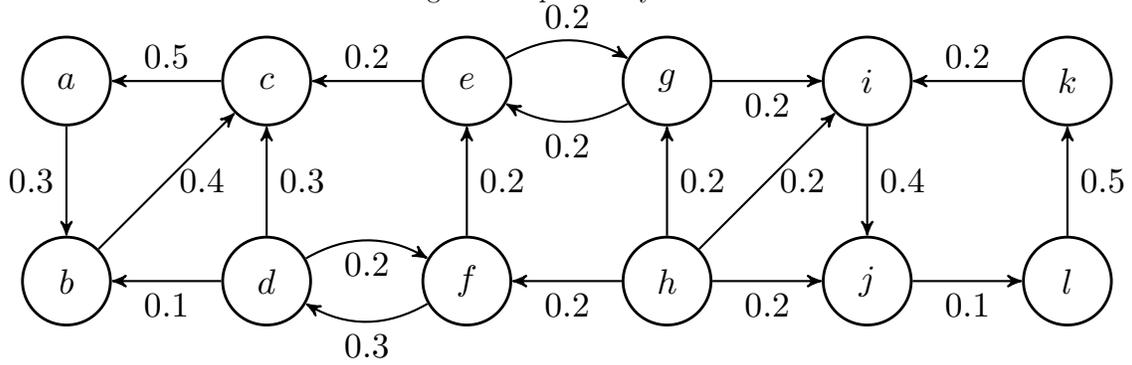}
\end{figure}

\begin{figure}[h!]
\caption{Ergodic class $\mathcal{E}_1$}\label{fig:community1}
\centering
	\includegraphics[width=.3\textwidth]{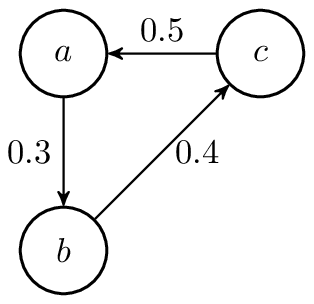}
\end{figure}

\begin{figure}[h!]
\caption{Ergodic class $\mathcal{E}_2$}\label{fig:community2}
\centering
	\includegraphics[width=.3\textwidth]{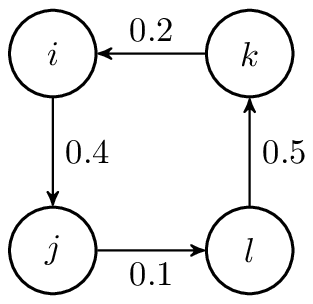}
\end{figure}

\begin{figure}[h!]
\caption{Number of supporters vs budget}\label{fig:supporters}
\centering
	\includegraphics[width=.9\textwidth]{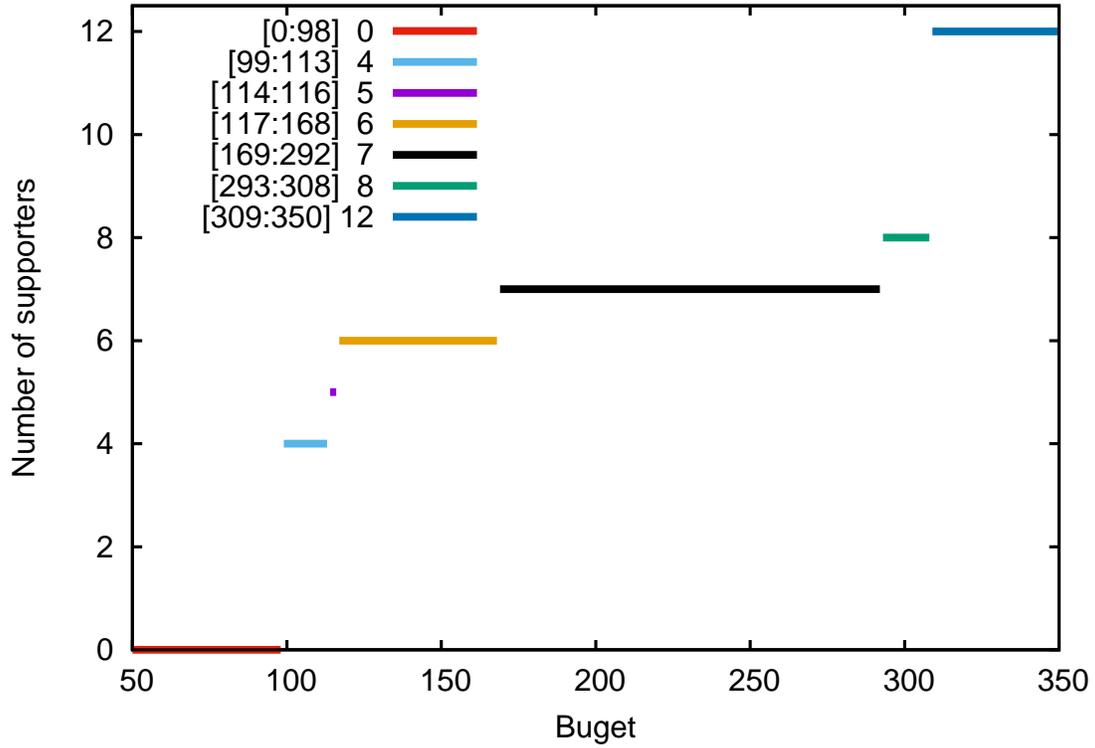}
\end{figure}

\newpage 
Inputs:
\begin{itemize}
\item The confidence matrix
	\begin{equation*}
	A=\bordermatrix{%
	  &  a &   b &	 c &   d &   e &   f &   g &   h &   i &   j &   k &   l \cr
	a &0.7 & 0.3 &   0 &   0 &   0 &   0 &   0 &   0 &   0 &   0 &   0 &   0 \cr
	b &  0 & 0.6 & 0.4 &   0 &   0 &   0 &   0 &   0 &   0 &   0 &   0 &   0 \cr
	c &0.5 &   0 & 0.5 &   0 &   0 &   0 &   0 &   0 &   0 &   0 &   0 &   0 \cr
	d &  0 & 0.1 & 0.3 & 0.4 &   0 & 0.2 &   0 &   0 &   0 &   0 &   0 &   0 \cr
	e &  0 &   0 & 0.2 &   0 & 0.6 &   0 & 0.2 &   0 &   0 &   0 &   0 &   0 \cr
	f &  0 &   0 &   0 & 0.3 & 0.2 & 0.5 &   0 &   0 &   0 &   0 &   0 &   0 \cr
	g &  0 &   0 &   0 &   0 & 0.2 &   0 & 0.6 &   0 & 0.2 &   0 &   0 &   0 \cr
	h &  0 &   0 &   0 &   0 &   0 & 0.2 & 0.2 & 0.2 & 0.2 & 0.2 &   0 &   0 \cr
	i &  0 &   0 &   0 &   0 &   0 &   0 &   0 &   0 & 0.6 & 0.4 &   0 &   0 \cr
	j &  0 &   0 &   0 &   0 &   0 &   0 &   0 &   0 &   0 & 0.9 &   0 & 0.1 \cr
	k &  0 &   0 &   0 &   0 &   0 &   0 &   0 &   0 & 0.2 &   0 & 0.8 &   0 \cr
	l &  0 &   0 &   0 &   0 &   0 &   0 &   0 &   0 &   0 &   0 & 0.5 & 0.5  
	};
	\end{equation*}
\item The initial true opinions of the agents
	\begin{equation*}
	\hat x=
	\bordermatrix{%
	&   a &  b &	 c &  d &  e &   f &  g &   h &  i &  j &  k & l \cr
	& 0.5 & 0.3 &  0.4 & 0.1 & 0.6 & 0.7 & 0.3 & 0.1 & 0.1 & 0.8 & 0.2 & 0.4 
	};
	\end{equation*}
\item The costs to change their initial true opinions by +0.1
	\begin{equation*}
	c=
	\bordermatrix{%
	&   a &  b &	 c &  d &  e &   f &  g &   h &  i &  j &  k & l \cr
	& 100 & 80 & 120 & 60 & 20 & 100 & 80 & 120 & 60 & 20 & 90 & 70 
	};
	\end{equation*}
\item The target opinion $x^*=1/2$;
\item A budget $B$ (in dollars).
\end{itemize}

\end{document}